\theoremstyle{definition}
\newtheorem{theorem}{\bf Theorem}
\newtheorem{lemma}{\bf Lemma}
\newtheorem{proposition}{\bf Proposition}
\newtheorem{corollary}{\bf Corollary}
\newtheorem{assumption}{\bf Assumption}
\newcommand{\q}{\ensuremath{\mathrm{q}}}
\providecommand{\eu}{\ensuremath{\mathrm{e}}}
\newcommand{\defeq}{\vcentcolon=}
\newcommand{\G}{G(\eu^{j\omega_\ell})}
\newcommand{\Gdir}{\widehat{G}_\text{dir}}
\newcommand{\Gind}{\widehat{G}_\text{ind}}
\newcommand{\Gio}{\widehat{G}_\text{io}}
\newcommand{\Tyr}{\widehat{T}_{yr}}
\newcommand{\Tur}{\widehat{T}_{ur}}
\newcommand{\E}{\mathbb{E}}
\providecommand*{\diff}%
{\@ifnextchar^{\DIfF}{\DIfF^{}}}
\def\DIfF^#1{%
\mathop{\mathrm{\mathstrut d}}%
\nolimits^{#1}\gobblespace}
\def\gobblespace{%
\futurelet\diffarg\opspace}
\def\opspace{%
\let\DiffSpace\!%
\ifx\diffarg(%
\let\DiffSpace\relax
\else
\ifx\diffarg[%
\let\DiffSpace\relax
\else
\ifx\diffarg\{%
\let\DiffSpace\relax
\fi\fi\fi\DiffSpace}
\providecommand*{\deriv}[3][]{%
\frac{\diff^{#1}#2}{\diff #3^{#1}}}
\newlength{\dhatheight}
\newcommand{\doublehat}[2][1.5mu]{%
    \settoheight{\dhatheight}{\ensuremath{\mskip-#1\widehat{#2}\mskip#1}}%
    \addtolength{\dhatheight}{-0.35ex}%
    \mskip#1\widehat{\vphantom{\rule{1pt}{\dhatheight}}%
    \smash{\mskip-#1\widehat{#2}\mskip#1}}\mskip-#1\relax}
\newcommand{\Giodh}{\doublehat{G}_\text{io}}
\newcommand{\asymvar}{\overset{\leadsto}{\text{{var}}}}
\newcommand*\bigcdot{\mathpalette\bigcdot@{.8}}
\newcommand*\bigcdot@[2]{{\mathbin{\vcenter{\hbox{\scalebox{#2}{$\m@th#1\bullet$}}}}}}
\title{\LARGE \bf Small Noise Analysis of Non-Parametric\\ Closed-Loop Identification }
\author{Mohamed Abdalmoaty and Roy S. Smith
\thanks{This work has been supported by the Swiss National Science Foundation under NCCR Automation (grant agreement  $51\text{NF}40\_180545$)}
\thanks{M. Abdalmoaty and R. Smith are with the Automatic Control Laboratory (IfA) and NCCR Automation of the Swiss Federal Institute of Technology (ETH Z\"{u}rich), 8092 Z\"{u}rich, Switzerland,
        {\tt\footnotesize \{mabdalmoaty,rsmith\}@control.ee.ethz.ch}}
}
\begin{document}
\bstctlcite{IEEEexample:BSTcontrol} 
\maketitle
\thispagestyle{empty}
\pagestyle{empty}

\begin{abstract}
We revisit the problem of non-parametric closed-loop identification in frequency domain; we give a brief survey of the literature and provide a small noise analysis of  the direct, indirect, and joint input-output methods when two independent experiments with identical excitation are used. The analysis is asymptotic in the noise variance (i.e., as the standard deviation of the innovations $\sigma \to 0$), for a finite data record of length $N$. We highlight the relationship between the estimators accuracy and the loop shape via asymptotic variance expressions given in terms of the sensitivity function. The results are illustrated using a numerical simulation example.
\end{abstract}
\section{Introduction}\label{sec:introduction}

Linear closed-loop identification problems have received a significant amount of attention in the Systems and Control community. This is mainly due to the close interaction between system identification and control design; see e.g., \cite{Gevers1993,Vandenhof1995,Zang1995,Hjalmarsson1996} and the references therein. Closed-loop identification is particularly crucial in cases where the open-loop system exhibits instability or when open-loop operation is practically prohibited due to safety or economic considerations.

One of the central problems of closed-loop identification is that  the feedback loop introduces correlations between the measured input and output signals via the unmeasurable disturbances and measurement noise. This makes it difficult to isolate the effect of noise, and hinders accurate identification. To tackle this challenge, several methodologies were developed, in the time-domain as well as the frequency-domain, under varying assumptions on the external excitation signal and the available prior knowledge of the system, if any. They can be broadly divided into ``direct", ``indirect" and ``joint input-output" methods,  see \cite{Ljung1999system,söderström1989system,pintelon2012system}.
Specifically, frequency-domain methods prove beneficial when the  excitation signal can be designed a priori, offering great flexibility for non-parametric modelling where the goal is to estimate the system at a fixed number of frequencies. When the excitation is periodic over time, and the output is measured in steady-state under open-loop operation, the Empirical Transfer Function Estimate (ETFE)  provides unbiased asymptotically optimal estimates as the number of periods $M\to\infty$ \cite[Chap. 6]{Ljung1999system}. When the excitation contains integer number of periods $M$, the accuracy can be improved by arithmetic averaging. The situation gets more complicated when data is collected in closed-loop.

A significant part of the literature on closed-loop identification focused on the use of the prediction-error framework for parametric modelling (see \cite{Forssell1999} and the references therein). Due to the intractability of finite-sample analysis of general parametric models, the performance of these methods is evaluated asymptotically as the data size increases via the first two moments in the asymptotic \cite{Gevers1997} and finite model order cases \cite{Ninnes2005a, Ninnes2005b}. Unfortunately, parametric models come with the complications of under-modelling, which can be alleviated by considering non-parametric models.

The classical papers \cite{Wellstead1977,Wellstead1981} addressed the problem of closed-loop non-parametric identification in frequency domain using the asymptotically unbiased spectral estimator (also known as the joint input-output approach \cite[Chap. 13]{Ljung1999system}; originally introduced in \cite{Akaike1968}) making use of a know exogenous reference signal. First order approximations of the finite-sample bias and variance of the estimator are given; 
however, these approximations should be treated with care, as the theoretical variance of the estimator is generally  infinite \cite{Heath2001_ind_cl_bias}. This is regardless of the fact that the estimator is consistent and asymptotically normal as the data record grows unbounded.

In fact, some open-loop identification methods are plagued by this infinite finite-sample variance problem whenever the excitation signal is stochastic. This is made clear in \cite{Broersen1995}, which compares the finite-sample bias and variance of three open-loop non-parametric estimators (ETFE, spectral estimator, windowed spectral estimator), when the input excitation is a known realization of a Gaussian white stochastic process. 

A finite-sample analysis of the statistical properties of the indirect non-parametric closed-loop  estimator is given in \cite{Heath2000, Heath2001_cl_pdf}.  This analysis involved deriving the finite-sample distribution of the estimator assuming steady-state periodic excitation, and the frequency domain noise being complex Gaussian. However, the latter assumption is unlikely to be satisfied for short data records, but would be valid, under weak conditions, asymptotically in the number of samples,  due to the properties of Fourier transforms (see \cite[Chap. 4]{Billinger2001}). A characterization of the probability density function in terms of the inverse of the controller was given in \cite{Heath2002_cl_var}.  

Although the finite-sample variance 
is infinite, the authors of \cite{Pintelon2001} used an asymptotic argument, in the number of periods  (i.e. as $M \to \infty$), and derived an approximate explicit bias and variance expressions in terms of the Signal-to-Noise (SNR) ratio. They also derived confidence bounds using the approximate variance assuming that the frequency domain noise has a circularly-symmetric complex Gaussian distribution, and relying on a central limit theorem argument.

Using similar analysis and arguments as \cite{Heath2001_cl_pdf}, \cite{Pintelon2003} derived the finite-sample distribution of the estimators as well as uncertainty bounds under the same Gaussian assumption on the noise. Nevertheless, the  expressions  obtained are exact only if the leakage effects in the Discrete Fourier Transform (DFT) are ignored, and the Gaussian assumption is substantiated.  Moreover, the argument used to justify the approximations is again based on a central limit theorem (obtained as $M\to \infty$, a condition that underlies the high SNR assumption there). 
On the other hand, \cite{Welsh2002} proposes a modification of the indirect estimator that is meant to remove its singularity, leading to a finite  variance (also see \cite{Goodwin2001}).
\smallskip

In this contribution we revisit the problem of non-parametric closed-loop identification in frequency domain. We take a different approach and provide a small noise analysis of  the direct, indirect, and joint input-output methods when an independent measurement of the plant input under identical periodic excitation is available. The analysis is asymptotic in the noise variance (i.e., as the standard deviation of the innovations $\sigma \to 0$), for a finite data record of length $N$; we call this a small noise analysis. It does not rely on any distributional assumptions, nor any central limit theorems. We highlight the relationship between the accuracy of the estimators and the loop shape via asymptotic variances expressions given in terms of the sensitivity function. The results are given for the single-input single-output case, but we note that they can be extended to cover the case with multiple inputs and outputs. 

\subsection*{Notation}
The symbol  $\ast$ denotes complex conjugate if it appears as a superscript; otherwise, it denotes a convolution, $\Re[x]$ denotes the real part of a complex random variable $x$. The symbol $\leadsto$ denotes convergence in distribution (weak convergence), and $\asymvar$ denotes the variance of the asymptotic distribution as $\sigma\to 0$.
\section{Experimental Configuration}\label{sec:formulation}
We will consider the closed-loop configuration shown in Figure \ref{figure:cl}, where $C$ and $G$ are discrete-time controller and plant respectively, $r_1$ and $r_2$ are two known exogenous signals, $u$ is the plant input, $y$ is the plant output. The signal $v$ is a discrete-time stationary stochastic process with a spectral factor (noise model) $H$; it represents output measurement noise as well as any unmeasurable disturbances entering the loop, while $e$ is a white innovation process with zero mean and standard deviation $\sigma$. We will simply refer to $v$ as ``noise".  Without loss of generality, we will assume a unit sample time, and for simplicity we define
\begin{equation*}
    r(k) \defeq r_2(k) + C(\q) r_1(k)
\end{equation*}
 where  $k = 0, 1, 2, \dots$ denotes the sample number, and assume that one of the exogenous signals may be used as a probing signal. The closed-loop system equations are
\begin{equation*}
\begin{aligned}
    y(k) &= S(\q) G(\q) r(k) + v_y(k)\\
    u(k) &= \hphantom{G(\q)}S(\q)r(k) - v_u(k)\\
\end{aligned}
\end{equation*}
where $\q$ denotes the forward shift operator,
\[
v_y(k) \defeq S(\q) v(k), \qquad v_u(k) \defeq S(\q) C(\q) v(k),
\]
and 
\[
S(\q) \defeq \frac{1}{1 + G(\q)C(\q)}
\]
is the sensitivity function.

\begin{figure}
\centering
\includegraphics[width=0.7\linewidth]{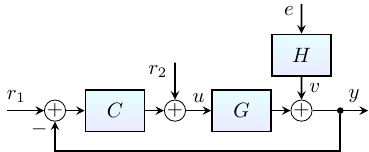}
\caption{Closed-loop configuration}
\label{figure:cl}
\end{figure}

An identification experiment will consist of providing an excitation signal $r$ and recording the corresponding input and output of the plant, resulting in a data set of reference-input-output triplets 
\[
D_N \defeq \{(r_k, u_k , y_k) \; | \; k = 0, \dots, N-1\}.
\]

We define the $N$-point DFT of the output $\{y_k\}_{k=0}^{N-1}$, as 
\begin{align*}
Y(\eu^{j\omega_\ell}) &\defeq \frac{1}{\sqrt{N}}\sum_{k=0}^{N-1} y_k \eu^{-j\omega_\ell k}, \\
\omega_\ell &= \frac{2\pi \ell}{N}, \quad \ell = 0. \dots, N-1,
\end{align*}
and similarly for the input, reference, and noises $v$, $v_y$, $v_u$.
All estimators and analysis will be given in the frequency domain, and since the estimators are defined for each frequency independently, we will drop the dependence of frequency from all the notations. 
\smallskip

We will adapt the following standard assumptions.

\begin{assumption}\label{assumption}
\hspace{0px}
\begin{enumerate}[label={\roman*)}]
    \item $G$ and $C$ are linear time-invariant, and the closed-loop system is stable,
    \item $r$ is periodic with a period $N$,
    \item $D_N$ is collected in steady-state, consists of one period of $r$, and  all signal records are time-synchronized with respect to $r$,
\end{enumerate}
\end{assumption}

The ultimate goal is to estimate the plant $\G$ at a uniform grid of frequencies $\omega_0, \dots, \omega_{N-1}$ using non-parametric estimators: $D_N \mapsto \widehat{G}(\eu^{j\omega_\ell})$. An integral part of the problem is the analysis of the properties of different estimation methods.

\section{Non-parametric Identification Methods}\label{sec:cl_methods}
When Assumption \ref{assumption} is satisfied, the deterministic parts of $u$ and $y$ are periodic and contain no transient. Their DFTs are related as
\begin{align*}
Y &= S G R + V_y\\
U &= \quad S R + V_u
\end{align*}
where $V_y $ and $V_u$ are the noise contributions to the DFT of the output and input, respectively. 

Define the following two (open-loop) estimators
\begin{equation}\label{eq:cl-tfs}
\begin{aligned}
     \Tyr &\defeq \frac{Y}{R} =&      SG \;+\; \frac{V_y}{R}, \\
     \Tur &\defeq \frac{U}{R} =& \quad S \;+\; \frac{V_u}{R}.
     \end{aligned}
\end{equation}
These are ETFEs of the closed-loop transfer functions $T_{yr} = SG$ from $r$ to $y$ and $T_{ur} = S$ from $r$ to $u$, respectively. They enjoy favored statistical properties for periodic excitation. Furthermore, they make minimal assumptions on the underlying system.
\smallskip

Now let us consider the following three closed-loop methods (\cite{Forssell1999,Heath2002_cl_var}):\medskip
\begin{enumerate}[label=\emph{\Alph*.}, itemsep= 2em]
\item \textit{The direct method}
\begin{equation}\label{eq:dir}
    \Gdir \defeq \frac{Y}{U} = \frac{G + {V_y}{(SR)^{-1}}}{\; 1+ {V_u}{(SR)^{-1}}} 
\end{equation}

\item \emph{The indirect method}
\begin{equation}\label{eq:ind}
    \Gind := \frac{\Tyr}{1-\Tyr C} = \frac{G + {V_y}{(S R)^{-1}}}{ \quad 1 - {V_y C }{(S R)^{-1}}}
\end{equation}

\item \emph{The joint input-output method}
\begin{equation}\label{eq:io}
    \Gio:= \frac{\Tyr}{\Tur}
\end{equation}
\end{enumerate}

For all three methods, the estimator of $G$ is constructed as a ratio of two estimators. The numerator is an estimator of $T_{yr}$, and the denominator is an estimator of the sensitivity function. It is clear that in the noise-free case, all estimators coincide with $G$. When noise is present, and depending on its distribution, there may be  a non-zero probability of division by zero, in particular at low frequencies where $S$ is usually small. Due to this singularity issue, the finite-sample variance of these estimators may be infinite.

We now observe the following. Firstly, due to the use of periodic excitation, $\Gdir = \Gio$. 
Secondly, if the leakage effects in the noise DFTs are neglected (so that $V_y = SV$, and $V_u = SCV$), the noise term in the denominator of $\Gind$ coincides with that of $\Gdir$ except for its sign.

When employing just one experiment, the numerator and denominator of each of these three estimators are statistically dependent. Questions of interest arise: How does the accuracy of estimation change when the numerator and denominator become statistically independent by construction; that is, by utilizing two independent estimates of $T_{yr}$ and $S$? Does the statistical dependence between the numerator and denominator improve or worsen the accuracy?

\section{Small noise analysis}\label{sec:analysis}
Let us consider the following two cases.

\begin{enumerate}[label= , itemsep =1em, ]
    \item {\bf{\textit{Case 1:}}} Only one experiment is available. In this case, the estimators are defined as in the previous section, and $\Gdir = \Gio$. Essentially, the direct and indirect estimators are also equivalent up to leakage effects in the  DFT of the noises. This difference appears because noise is not periodic and is filtered differently by the two estimators; it disappears asymptotically in the total number of samples $N$ with a rate $\mathcal{O}_p(N^{-\frac{1}{2}})$ when $H$ is rational/strictly stable (see \cite[Sec. 6.3]{Ljung1999system}).

    \item {\bf{\textit{Case 2:}}} Here we consider the possibility of generating two independent estimates of the closed-loop transfer functions \eqref{eq:cl-tfs}. For simplicity, assume that the same excitation signal $r$ is used, and an independent measurement of $u$ is possible. This means that the deterministic parts of the measured signals are identical, but the noises are independent. Let us denote the DFTs of the output as $Y^{(1)}$, and the input as $U^{(2)}$ where the subscript denotes the experiment used to generate the estimates, and similarly denote the DFTs of noises.
    Now let 
    \[
    \hat{T}_{yr}^{(1)} = \frac{Y^{(1)}}{R}, \quad   \hat{T}_{ur}^{(2)} = \frac{U^{(2)}}{R},
    \]
     and define the following joint input-output estimator:
    \begin{align}
        \Giodh = \frac{\hat{T}_{yr}^{(1)}}{\hat{T}_{ur}^{(2)}} =\frac{G + {V_y^{(1)}}{(SR)^{-1}}}{\; 1+ {V_u^{(2)}}{(SR)^{-1}}} \label{eq:dir2}
    \end{align}
    This estimator is different from \eqref{eq:dir} and \eqref{eq:io} in that the numerator and denominator of the defining fraction are independent by construction. We use the double-hat superscript to denote this important difference.
\end{enumerate}

We are interested in the performance of the estimators constructed in these two cases.  Before studying this, we have the following proposition.

\begin{proposition}\label{prop:finite_dft_statistics}
 It is possible to write
\[
V_y = \sigma \bar{V}_y, \qquad V_u = \sigma \bar{V}_u
\]
where $\bar{V}_y$ and $\bar{V}_u$ are complex random variables with zero mean and finite variances. Furthermore,
\begin{equation}\label{eq:variances}
\begin{aligned}
    \bar{\sigma}_{y} &= \E[\bar{V}_y \bar{V}_y^\ast] = F_N \ast |SH|^2  , \\ 
    \bar{\sigma}_{u}&=\E[\bar{V}_u \bar{V}_u^\ast] = F_N \ast |SCH|^2  \\
    \bar{\sigma}_{yu} &= \E[\bar{V}_y \bar{V}_u^\ast] = F_N \ast   |SH|^2 C^*
\end{aligned}
\end{equation}
where $F_N$ is the Fej\'{e}r kernel on the circle (\cite{Stein2011fourier}).

\end{proposition}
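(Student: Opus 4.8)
\emph{Proof plan.} The claim has two parts: the $\sigma$-scaling (together with zero mean and finite variance), and the three explicit second-moment formulas. For the scaling I would write the innovation as $e(k)=\sigma\,\bar e(k)$ with $\bar e$ a zero-mean white process of unit variance. Under Assumption~\ref{assumption}(i) the closed loop is stable, so $SH$ and $SCH$ are stable filters; combined with the steady-state condition of Assumption~\ref{assumption}(iii), this makes $\bar v_y \defeq (SH)(\q)\bar e$ and $\bar v_u \defeq (SCH)(\q)\bar e$ well-defined and jointly stationary, with $v_y=\sigma\bar v_y$ and $v_u=\sigma\bar v_u$. Since the $N$-point DFT is linear, $V_y=\sigma\bar V_y$ and $V_u=\sigma\bar V_u$ where $\bar V_y,\bar V_u$ are the DFTs of $\bar v_y,\bar v_u$, and $\E[\bar e]=0$ gives $\E[\bar V_y]=\E[\bar V_u]=0$. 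Finiteness of the variances will then be read off the formulas below, since $F_N$ is integrable and $|SH|^2$, $|SCH|^2$, $|SH|^2 C^{\ast}=(SH)\overline{(SCH)}$ are bounded on the unit circle (this also uses that the noise model $H$ is stable).

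To obtain the second moments, fix a frequency $\omega_\ell$. Using that $\bar v_y$ is real-valued, expand
\[
\E[\bar V_y\bar V_y^{\ast}] \;=\; \frac1N\sum_{k=0}^{N-1}\sum_{k'=0}^{N-1}\E\!\left[\bar v_y(k)\,\bar v_y(k')\right]\eu^{-j\omega_\ell(k-k')}.
\]
By stationarity the expectation equals $r_y(k-k')$, where $r_y$ is the autocovariance sequence of $\bar v_y$; grouping the terms by the lag $\tau=k-k'$ and using the elementary count $\#\{(k,k'):k-k'=\tau,\ 0\le k,k'\le N-1\}=N-|\tau|$ turns the double sum into $\sum_{\tau=-(N-1)}^{N-1}\bigl(1-\tfrac{|\tau|}{N}\bigr)\,r_y(\tau)\,\eu^{-j\omega_\ell\tau}$. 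Substituting the spectral representation $r_y(\tau)=\frac1{2\pi}\int_{-\pi}^{\pi}|S(\eu^{j\omega})H(\eu^{j\omega})|^2\,\eu^{j\omega\tau}\,d\omega$ (the spectral density of $\bar v_y$ is $|SH|^2$ because $\bar e$ has unit spectral density), interchanging the \emph{finite} sum with the integral, and recognising $\sum_{|\tau|<N}\bigl(1-\tfrac{|\tau|}{N}\bigr)\eu^{j(\omega-\omega_\ell)\tau}$ as the Fej\'er kernel $F_N(\omega-\omega_\ell)$, yields $\E[\bar V_y\bar V_y^{\ast}]=(F_N\ast|SH|^2)(\omega_\ell)$. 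The identity for $\bar\sigma_u$ is obtained verbatim with $SH$ replaced by $SCH$.

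For the cross term I would repeat the same computation with the cross-covariance $r_{yu}(\tau)=\E[\bar v_y(m)\,\bar v_u(m-\tau)]$ in place of $r_y$. Writing $\bar v_y$ and $\bar v_u$ as convolutions of $\bar e$ with the real impulse responses $g_y,g_u$ of $SH,SCH$ gives $r_{yu}(\tau)=\sum_b g_y(\tau+b)\,g_u(b)$, whose $\eu^{-j\omega\tau}$-transform is the cross-spectral density $(SH)(\eu^{j\omega})\,\overline{(SCH)(\eu^{j\omega})}=|SH|^2 C^{\ast}$; the same count-and-substitute step then gives $\E[\bar V_y\bar V_u^{\ast}]=F_N\ast(|SH|^2 C^{\ast})$. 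There is no deep obstacle: the computation is routine once the stationary convolution representation is available. The two points requiring care are (i) justifying the stationarity of $\bar v_y$ and $\bar v_u$ from the steady-state assumption, so that the (cross-)covariance sequences and their spectral representations are legitimate, and (ii) tracking the complex conjugations in the cross term so that $C$ enters as $C^{\ast}$ rather than $C$; both are handled cleanly by working directly with the convolution representations.
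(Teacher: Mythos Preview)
Your proposal is correct and follows essentially the same route as the paper: factor out $\sigma$ by linearity and stability, expand the DFT second moments as a double sum, use stationarity to reduce to a single lag sum with the triangular weights $(1-|\tau|/N)$, and identify the Fej\'er kernel via the spectral representation of the (cross-)covariance sequence. Your write-up is in fact slightly more explicit than the paper's on two points---the need for $H$ to be stable so that $|SH|^2$ is bounded, and the conjugation bookkeeping that produces $C^{\ast}$ in the cross term---but the underlying argument is identical.
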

\begin{proof}
    The first part follows directly by the linearity of the DFT and the stability of the closed-loop system. Indeed, one can write $v_y(k) = \sigma\, S(\q)\bar{v}(k)$, where $\bar{v}(k)$ is a normalized version of $v$ with variance $\frac{1}{\pi}\int_{-\pi}^\pi|H(\eu^{j\omega})|^2 \diff \omega$. The same argument is used for $v_u(k)$.
    The second part follows from the definition of the DFT and the linearity of the expectation operator. Let $\rho(k)$ be the correlation function of ${v_y}$. Then, the variance of $\bar{V}_y$ at frequency $\ell$ is
    \[
        \mathbb{E}[\bar{V}_y \bar{V}_y^\ast] = \frac{1}{N} \sum_{m=0}^{N-1}\sum_{n=0}^{N-1} \mathbb{E}[v_m v_n] \eu^{-j \omega_\ell m} \eu^{j \omega_\ell n},
    \]
    and by stationarity, and letting $k = m-n$,
    \[
    \begin{aligned}
        &= \frac{1}{N} \sum_{n=0}^{N-1}  \sum_{k=-n}^{N-1-n} \rho(k) \eu^{-j \omega_\ell k}  \\
        &= \sum_{n=-(N-1)}^{N-1} f_N(n) \rho(n) \eu^{-j \omega_\ell n} 
    \end{aligned}
    \]
    where  $f_N(n) = (N-|n|)/N$ for $|n|\leq N$ and zero otherwise. This means we get a convolution with the Fej\'{e}r kernel $F_N(\omega) := \sum_{k = -\infty}^{+\infty} f_N(k) \eu^{-j\omega k}$ in the frequency domain:
    \[
    \begin{aligned}
   \mathbb{E}[\bar{V}_y \bar{V}_y^\ast] &= \frac{1}{2\pi} \int_{-\pi}^\pi  |S(\eu^{j\omega})H(\eu^{j\omega)})|^2 F_N(\omega_\ell - \omega) \diff \omega.
    \end{aligned}
    \]
    The same steps are used to derive the variance of $\bar{V}_u$ and the cross-covariance between $\bar{V}_y$ and $\bar{V}_u$.
\end{proof}

We now state the main lemma.
\begin{lemma}\label{lemma:main}
Suppose that Assumption \ref{assumption} holds. Then, for any finite $N$, and in Case 1,
\[
\begin{aligned}
\sigma^{-1} (\Gdir - G ) &\leadsto \frac{1}{SR}(\bar{V}_y - G\bar{V}_u) \\
\sigma^{-1} (\Gind - G ) &\leadsto \frac{1}{S^2R}\bar{V_y}
\end{aligned}
\]
as $\sigma\to 0$, where $\bar{V}_y$ and $\bar{V}_u$ are zero mean statistically dependent complex random variables with finite variances $\bar{\sigma}_y$, $\bar{\sigma}_u$ and a cross-variance $\bar{\sigma}_{yu}$ as shown in Proposition \ref{prop:finite_dft_statistics}. On the other hand, in Case 2,
\[
\begin{aligned}
\sigma^{-1} (\Giodh - G ) &\leadsto \frac{1}{SR}(\bar{V}_y^{(1)} - G\bar{V}_u^{(2)})\\
\end{aligned}
\]
in which $\bar{V}_y^{(1)}$ is independent of $\bar{V}_u^{(2)}$.
\end{lemma}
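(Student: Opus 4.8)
The plan is to reduce the claim to elementary algebra followed by a degenerate limiting argument, using Proposition~\ref{prop:finite_dft_statistics} to pull the scale $\sigma$ out of the noise DFTs. First I would substitute $V_y=\sigma\bar V_y$ and $V_u=\sigma\bar V_u$ into \eqref{eq:dir} and clear the common factor from numerator and denominator: a one-line computation gives
\[
\Gdir-G=\frac{\sigma\,(SR)^{-1}\bigl(\bar V_y-G\bar V_u\bigr)}{1+\sigma\,(SR)^{-1}\bar V_u},
\]
so that $\sigma^{-1}(\Gdir-G)$ is the \emph{$\sigma$-independent} random variable $(SR)^{-1}(\bar V_y-G\bar V_u)$ divided by $1+\sigma(SR)^{-1}\bar V_u$. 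For the indirect estimator I would do the same with \eqref{eq:ind}; the only extra ingredient is the loop identity $1+GC=S^{-1}$, which collapses the numerator factor $(SR)^{-1}(1+GC)$ to $(S^2R)^{-1}$ and yields
\[
\sigma^{-1}(\Gind-G)=\frac{(S^2R)^{-1}\bar V_y}{1-\sigma\,(SR)^{-1}C\,\bar V_y}.
\]
Case~2 is algebraically identical to the direct case applied to \eqref{eq:dir2}, the only change being that the numerator and denominator noises now carry the independent superscripts $(1)$ and $(2)$.

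Next I would let $\sigma\to0$. In all three displays the numerator does not depend on $\sigma$, and the denominator has the form $1+\sigma\xi$ for a fixed complex random variable $\xi$ (namely $(SR)^{-1}\bar V_u$, $-(SR)^{-1}C\bar V_y$, or $(SR)^{-1}\bar V_u^{(2)}$), which by Proposition~\ref{prop:finite_dft_statistics} has $\E[|\xi|^2]<\infty$ and hence is finite almost surely. Therefore $1+\sigma\xi\to1$ almost surely, and each ratio converges almost surely — in particular in distribution — to the stated limit. The independence assertion in Case~2 is then immediate: the two experiments are run independently, so $V_y^{(1)}$, and hence $\bar V_y^{(1)}$, is independent of $V_u^{(2)}$, and hence of $\bar V_u^{(2)}$.

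The one point requiring care, and really the only obstacle, is that each estimator is a ratio and is undefined on the event where its denominator vanishes — this is exactly the singularity noted after \eqref{eq:io}. I would dispatch it by observing that $1+\sigma\xi=0$ only on $\{\xi=-1/\sigma\}\subseteq\{|\xi|\ge1/\sigma\}$, whose probability tends to $0$ as $\sigma\to0$ by Chebyshev's inequality together with the finiteness of the variance of $\xi$. Hence the estimator is well defined with probability approaching one, and assigning it an arbitrary value on the exceptional event perturbs its law by a vanishing amount, so the almost-sure convergence above still delivers convergence in distribution. Implicit throughout is that we work at a frequency $\omega_\ell$ with $R(\eu^{j\omega_\ell})\neq0$ — so that the ETFEs in \eqref{eq:cl-tfs} are defined — and with $S(\eu^{j\omega_\ell})\neq0$; both are part of the standing setup under Assumption~\ref{assumption}.
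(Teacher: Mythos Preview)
Your argument is correct and is essentially the same idea as the paper's: both extract the first-order behavior of the estimator in $\sigma$ around $\sigma=0$ and observe that the higher-order contribution vanishes pathwise. The paper packages this as a one-line Taylor expansion,
\[
\widehat G_{\bigcdot}=G+\deriv{\widehat G_{\bigcdot}}{\sigma}\Big|_{\sigma=0}\,\sigma+o(\sigma),
\]
and then reads off $\sigma^{-1}(\widehat G_{\bigcdot}-G)\leadsto \deriv{\widehat G_{\bigcdot}}{\sigma}\big|_{\sigma=0}$ from the definition of $o(\sigma)$; you instead perform the algebra explicitly to obtain the \emph{exact} identity $\sigma^{-1}(\widehat G_{\bigcdot}-G)=\text{(limit)}/(1+\sigma\xi)$ and let the denominator tend to~$1$ almost surely. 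What your route buys is twofold: you obtain almost-sure convergence (hence $\leadsto$) without appealing to Taylor's theorem, and you confront the singularity head-on, showing via Chebyshev that the event $\{1+\sigma\xi=0\}$ has vanishing probability --- a point the paper's proof does not address. The cost is only a couple of extra lines of algebra, including the use of $1+GC=S^{-1}$ for the indirect case, which the paper leaves implicit.
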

\begin{proof}
    First observe that the estimators $\widehat{G}_{\bigcdot}$ are explicit smooth functions of $\sigma$. A first order Taylor expansion around $\sigma = 0$ can be obtained as
    \[
    \widehat{G}_{\bigcdot} = G + \deriv{\widehat{G}_{\bigcdot}}{\sigma}\biggr|_{\sigma = 0} \, \sigma + o(\sigma), \quad \sigma \to 0.
    \]
    Subtracting $G$ from both sides,  dividing by $\sigma$ and noticing that, by definition, ${o(\sigma)}/{\sigma}\to 0$  as $\sigma \to 0$, we see that  $\sigma^{-1}(\widehat{G}_{\bigcdot} - G) \leadsto \deriv{\widehat{G}_{\bigcdot}}{\sigma}\big|_{\sigma = 0}$ as $\sigma \to 0$. The results then follow from \eqref{eq:dir} and \eqref{eq:ind} for Case 1, and from \eqref{eq:dir2} for Case 2
\end{proof}

The result in Lemma \ref{lemma:main} is not a central limit theorem; it is however a small noise finite-sample result, and the limiting distribution as $\sigma \to 0$ is not necessarily complex Gaussian. The convergence takes place due to the vanishing noise, instead of increasing signal power as is usually the case in the available literature that relies on asymptotics in the data record length. Notice however that the limiting distributions of Lemma \ref{lemma:main} have finite variances that can be evaluated exactly and compared without making further assumptions on the noise. We refer to these variances as the asymptotic variances as $\sigma \to 0$. In the first case, they  are given as
\[
\begin{aligned}
  \asymvar[\Gdir]  & = \frac{1}{|SR|^2}(\bar{\sigma}_y + |G|^2\bar{\sigma}_u - 2 \Re[G^\ast\bar{\sigma}_{yu}]),\\
   \asymvar[\Gind]  & =\frac{1}{|S|^2|SR|^2}\bar{\sigma}_{y},
\end{aligned}
\]
and in the second case 
\[
\begin{aligned}
      \asymvar[\Giodh]  & = \frac{1}{|SR|^2}(\bar{\sigma}_y + |G|^2\bar{\sigma}_u).
\end{aligned}
\]

It is immediately clear from these expressions that in the frequency bands where $|S| \ll 1$ or $ |CG| \gg 1$ (i.e., where the loop shape has a high magnitude, and the closed loop control performance is good), the accuracy of the estimators in the two cases will be poor -- a known result in closed-loop identification. However, in other frequency bands, one of the two cases may have a better performance. \smallskip

Using the above asymptotic variance expressions, we arrive at the following theorem.

\begin{theorem}\label{thm:variance_inequality}
Suppose Assumption \ref{assumption} holds. Then, as $\sigma \to 0$,
\[
\asymvar[\Giodh] < \asymvar[\Gind]
\]
over all frequencies $\omega$ where $\Re[G^\ast\bar{\sigma}_{yu}] < 0$
\end{theorem}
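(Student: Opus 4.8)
The plan is to compare the two asymptotic variances term-by-term using the explicit expressions derived just above the theorem. We have
\[
\asymvar[\Giodh] = \frac{1}{|SR|^2}\bigl(\bar{\sigma}_y + |G|^2\bar{\sigma}_u\bigr), \qquad
\asymvar[\Gind] = \frac{1}{|S|^2|SR|^2}\,\bar{\sigma}_y .
\]
So the inequality $\asymvar[\Giodh] < \asymvar[\Gind]$ is, after multiplying through by the common positive factor $|SR|^2$, equivalent to
\[
\bar{\sigma}_y + |G|^2 \bar{\sigma}_u < \frac{1}{|S|^2}\,\bar{\sigma}_y .
\]
My first step is therefore to reduce the claim to this scalar inequality in the (real, nonnegative) quantities $\bar{\sigma}_y$, $\bar{\sigma}_u$ and the complex cross-variance $\bar{\sigma}_{yu}$, keeping in mind that $\bar{\sigma}_y,\bar{\sigma}_u>0$ as variances of non-degenerate complex random variables.

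The second step is to get a handle on $1/|S|^2$. Here I would use the fact, noted in the text when comparing $\Gind$ and $\Gdir$, that if leakage is neglected then $V_y = SV$ and $V_u = SCV$, so that at the level of the limiting random variables $\bar V_u = C\,\bar V_y$ (this is also visible from Proposition~\ref{prop:finite_dft_statistics}: $\bar\sigma_u = |C|^2\bar\sigma_y$ and $\bar\sigma_{yu} = C^\ast\bar\sigma_y$ when the Fej\'er convolution is dropped). Using $1 + GC = 1/S$ and hence $1/|S|^2 = |1+GC|^2 = 1 + 2\Re[GC] + |GC|^2$, the target inequality becomes
\[
\bar\sigma_y + |G|^2|C|^2\bar\sigma_y < \bigl(1 + 2\Re[GC] + |G|^2|C|^2\bigr)\bar\sigma_y,
\]
i.e. $0 < 2\Re[GC]\,\bar\sigma_y$. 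The third step is to connect $\Re[GC]$ with the hypothesis: since $\bar\sigma_{yu} = C^\ast\bar\sigma_y$ (leakage neglected), we have $G^\ast\bar\sigma_{yu} = G^\ast C^\ast \bar\sigma_y = (GC)^\ast\bar\sigma_y$, so $\Re[G^\ast\bar\sigma_{yu}] = \Re[GC]\,\bar\sigma_y$, and because $\bar\sigma_y>0$ the assumed condition $\Re[G^\ast\bar\sigma_{yu}]<0$ is exactly $\Re[GC]<0$, which gives the strict inequality. Finally I would reconcile the two sides: subtracting, $\asymvar[\Gind]-\asymvar[\Giodh] = \frac{1}{|SR|^2}\bigl((|S|^{-2}-1)\bar\sigma_y - |G|^2\bar\sigma_u\bigr)$, and show this equals $\frac{-2\Re[G^\ast\bar\sigma_{yu}]}{|SR|^2}$ (which is manifestly positive under the hypothesis) — note the same combination $-2\Re[G^\ast\bar\sigma_{yu}]$ already appears in $\asymvar[\Gdir]$.

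The main obstacle is the leakage term. The clean identities $\bar\sigma_u=|C|^2\bar\sigma_y$ and $\bar\sigma_{yu}=C^\ast\bar\sigma_y$ hold exactly only when the Fej\'er-kernel convolution in Proposition~\ref{prop:finite_dft_statistics} is replaced by a pointwise product, which is not literally true for finite $N$. I would handle this in one of two ways: (i) present the result with the standing "leakage neglected" simplification already invoked in Section~\ref{sec:cl_methods}, so that the DFT noise variances reduce to the pointwise expressions and the chain above is exact; or (ii) keep the Fej\'er convolutions and argue directly from $\asymvar[\Gind] - \asymvar[\Giodh] = \frac{1}{|SR|^2}\bigl(\frac{1-|S|^2}{|S|^2}\bar\sigma_y - |G|^2\bar\sigma_u\bigr)$, expanding $1/|S|^2 = |1+GC|^2$ without assuming any relation among $\bar\sigma_y,\bar\sigma_u,\bar\sigma_{yu}$ and instead bounding $|G|^2\bar\sigma_u$ in terms of $\bar\sigma_y$ via Cauchy–Schwarz applied to $\bar\sigma_{yu}$. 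Option (i) is the cleaner route and is consistent with the approximations already adopted in the paper; I expect the intended proof follows it, the only real work being the bookkeeping that turns $\Re[G^\ast\bar\sigma_{yu}]<0$ into the sign of $\Re[GC]$.
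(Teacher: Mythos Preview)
Your proposal contains a genuine sign error, and tracing it reveals that the statement as printed (with $\Gind$) is almost certainly a typo for $\Gdir$. Carry your own no-leakage computation to the end: with $\bar\sigma_u=|C|^2\bar\sigma_y$ and $|S|^{-2}=1+2\Re[GC]+|GC|^2$ you get
\[
\asymvar[\Gind]-\asymvar[\Giodh]
=\frac{\bar\sigma_y}{|SR|^2}\bigl(2\Re[GC]+|GC|^2-|GC|^2\bigr)
=\frac{2\Re[GC]\,\bar\sigma_y}{|SR|^2}
=\frac{2\Re[G^\ast\bar\sigma_{yu}]}{|SR|^2},
\]
which is \emph{negative} under the hypothesis $\Re[G^\ast\bar\sigma_{yu}]<0$. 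So, neglecting leakage, the inequality goes the other way for $\Gind$; your claimed identity with $-2\Re[G^\ast\bar\sigma_{yu}]$ has the wrong sign, and the sentence ``which gives the strict inequality'' does not follow.

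The paper's own proof is literally ``by direct comparison of the variance expressions,'' and that one-liner works exactly---with no leakage approximation---only against $\Gdir$:
\[
\asymvar[\Giodh]-\asymvar[\Gdir]
=\frac{1}{|SR|^2}\Bigl[(\bar\sigma_y+|G|^2\bar\sigma_u)-(\bar\sigma_y+|G|^2\bar\sigma_u-2\Re[G^\ast\bar\sigma_{yu}])\Bigr]
=\frac{2\Re[G^\ast\bar\sigma_{yu}]}{|SR|^2},
\]
so $\asymvar[\Giodh]<\asymvar[\Gdir]$ precisely when $\Re[G^\ast\bar\sigma_{yu}]<0$. All the surrounding text (the remark that the cross term $-2\Re[G^\ast\bar\sigma_{yu}]$ sits in $\asymvar[\Gdir]$, Corollary~\ref{cor:no_leackage} equating $\asymvar[\Gdir]\approx\asymvar[\Gind]$, the simulations and conclusions which compare $\Giodh$ with $\Gdir$) confirms that $\Gdir$ is the intended comparator. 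Your leakage discussion and Cauchy--Schwarz route are unnecessary once the typo is recognised: the result is an exact one-line subtraction.
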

\begin{proof}
    By direct comparison of the variance expressions.
\end{proof}

This shows that the correlation between the numerator and denominator of the defining fractions may improve or worsen the accuracy.
The definition of $\bar{\sigma}_{yu}$ in \eqref{eq:variances} is given in terms of a convolution, making the interpretation of the result in terms of the loop shape less evident. Fortunately, ignoring the leakage effects clarifies the result.

\begin{corollary}\label{cor:no_leackage}
    Suppose Assumption \ref{assumption} holds. Then, as $\sigma \to 0$, and neglecting leakage effects in the DFT of the noise terms,
    \[
      \asymvar[\Gdir] \approx \asymvar[\Gind] \approx   \frac{|H|^2}{|SR|^2},
    \]
    and
    \[
      \asymvar[\Giodh] \approx \frac{(1+|CG|^2)|H|^2}{|R|^2}.
    \]
Furthermore, the condition in Theorem \ref{thm:variance_inequality} is approximately equivalent to $\Re[CG]<0$.
\end{corollary}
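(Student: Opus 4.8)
The plan is to replace every convolution with the Fej\'er kernel in \eqref{eq:variances} by a pointwise evaluation, and then to simplify the three asymptotic-variance expressions stated just after Lemma \ref{lemma:main} using only the elementary identity $S^{-1}=1+CG$. The key first observation is that $F_N$ is a summability (approximate-identity) kernel on the circle: it is nonnegative, has unit mass $\tfrac{1}{2\pi}\int_{-\pi}^{\pi}F_N=1$, and concentrates at the origin as $N$ grows. ``Neglecting leakage in the DFTs of the noise'' is precisely the statement that the spectra $|SH|^2$, $|SCH|^2$ and $|SH|^2C^\ast$ are essentially constant over the effective width of $F_N$, so that $F_N\ast(\bigcdot)\approx(\bigcdot)$ at each $\omega_\ell$. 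Under this replacement Proposition \ref{prop:finite_dft_statistics} gives $\bar\sigma_y\approx|SH|^2$, $\bar\sigma_u\approx|SCH|^2$ and $\bar\sigma_{yu}\approx|SH|^2C^\ast$.

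The second step is to feed these into the three variance formulas. For the indirect estimator the computation is immediate: $\asymvar[\Gind]\approx|SH|^2/(|S|^2|SR|^2)=|H|^2/|SR|^2$. For the joint input-output estimator one has $\bar\sigma_y+|G|^2\bar\sigma_u\approx|SH|^2(1+|CG|^2)$, so dividing by $|SR|^2$ and cancelling $|S|^2$ yields $(1+|CG|^2)|H|^2/|R|^2$. For the direct estimator the three terms of the numerator of $\asymvar[\Gdir]$ recombine, after using $\Re[G^\ast C^\ast]=\Re[CG]$ together with $|1+CG|=|S|^{-1}$, into $|H|^2$; hence $\asymvar[\Gdir]\approx|H|^2/|SR|^2$, matching the indirect case. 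Finally, for the last assertion, the same pointwise approximation turns $\Re[G^\ast\bar\sigma_{yu}]$ into $|SH|^2\,\Re[CG]$, a fixed-sign multiple of $\Re[CG]$, so the hypothesis $\Re[G^\ast\bar\sigma_{yu}]<0$ of Theorem \ref{thm:variance_inequality} becomes equivalent to $\Re[CG]<0$.

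The underlying algebra is routine; the only point that genuinely needs care is the meaning of ``neglecting leakage''. The identity $F_N\ast f=f$ holds only approximately for finite $N$ --- it becomes exact as $N\to\infty$ for continuous $f$, and is exact for every $N$ only when $f$ is constant --- so the conclusions are truly approximate and must be stated with ``$\approx$'' rather than ``$=$''. Making the approximation quantitative would mean bounding a modulus of continuity of the relevant spectra against the $L^1$ mass of $F_N$ away from the origin, which is exactly the leakage being discarded; I expect this to be the only real obstacle, everything downstream being a direct substitution followed by the use of $S^{-1}=1+CG$.
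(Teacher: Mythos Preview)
Your approach is exactly the paper's: replace the Fej\'er kernel by a Dirac delta (you phrase it via the approximate-identity property of $F_N$, the paper simply says ``replace $F_N$ by a Dirac delta''), obtain $\bar\sigma_y\approx|SH|^2$, $\bar\sigma_u\approx|SCH|^2$, $\bar\sigma_{yu}\approx|SH|^2C^\ast$, and substitute. The paper's proof is in fact terser than yours --- after listing those three approximations it says only that the remaining claims are ``evident by definition'' --- so your added discussion of why $F_N\ast f\approx f$ and your explicit reductions for $\Gind$ and $\Giodh$ are welcome elaborations of the same argument.

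One caution on the $\Gdir$ step. Plugging the displayed values literally into $\bar\sigma_y+|G|^2\bar\sigma_u-2\Re[G^\ast\bar\sigma_{yu}]$ gives
\[
|SH|^2\bigl(1+|CG|^2-2\Re[CG]\bigr)=|SH|^2\,|1-CG|^2,
\]
not $|SH|^2|1+CG|^2=|H|^2$. The identity $|1+CG|=|S|^{-1}$ only does the job if the cross term enters with a \emph{plus} sign. That sign is actually correct once you track the convention: since $u=Sr-v_u$, the ``noise contribution'' $V_u$ in $U=SR+V_u$ carries an implicit minus relative to the DFT of $v_u$, which flips the sign of $\bar\sigma_{yu}$ (or, equivalently, the limit in Lemma~\ref{lemma:main} becomes $\bar V_y+G\bar V_u$). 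The paper glosses over this as well, but since you explicitly invoke $|1+CG|=|S|^{-1}$ you should make the sign bookkeeping visible; otherwise the recombination you assert does not follow from the formulas as written.
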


\begin{proof}
Neglecting transient effects is equivalent to replacing the Fej\'{e}r kernel $F_N$ in the convolutions in \eqref{eq:variances} by a Dirac delta, resulting in
    \[
        \bar{\sigma}_y  \approx |SH|^2, \quad \bar{\sigma}_u    \approx |SCH|^2, \quad
        \bar{\sigma}_{yu} \approx |SH|^2C^\ast
    \]
The second part is evident by definition.
\end{proof}

\subsection*{Averaging over multiple experiments}
In cases where multiple independent data sets are available, the usual practice is to use averaging to improve the estimator accuracy. While arithmetic averages are possible, they still suffer from the infinite variance problem. Fortunately, geometrical averages over two or more experiments have a finite variance in general (except at $\omega =0$) \cite{Guillaume1996}. For the direct method, with two experiments,  the geometrically averaged estimator is
\[
  \overline{\Gdir} = \left( \frac{Y^{(1)}}{U^{(1)}} \times   \frac{Y^{(2)}}{U^{(2)}}\right)^{\frac{1}{2}},
\]
and the fact that $\asymvar\big[\,{\overline{\Gdir}}\,\big] = \frac{1}{2} \asymvar[\Gdir]$ follows in a straightforward manner by employing the same arguments used in Lemma \ref{lemma:main}. A geometrically averaged version of $\Giodh$ can similarly be defined as
\[
\overline{\Giodh} = \left(\frac{Y^{(1)}}{U^{(3)}} \times \frac{Y^{(2)}}{U^{(4)}}\right)^{\frac{1}{2}};
\]
its non-asymptotic variance is finite, and its asymptotic variance is half of that of $\Giodh$. 
\medskip

\section{Numerical Simulations}\label{sec:simulations}
To demonstrate the results, we performed a numerical simulation study in \textsc{Matlab} using the following transfer functions\footnote{this is the same system used in \cite{Vandenhof1993indirect}}
\[
\begin{aligned}
    G(\q) &= \frac{1}{1 - 1.6 \q^{-1} +0.89 \q^{-2}},\\[0.5em]
    C(\q) &= \q^{-1} - 0.8 q^{-2},\\[0.5em]
    H(\q) &= \frac{1-1.56 \q^{-1} + 1.045 \q^{-2} - 0.3338 \q^{-3}}{1-2.35 \q^{-1}+ 2.09 \q^{-2} -0.6675 \q^{-3}},
\end{aligned}
\]
connected as shown in Figure \ref{figure:cl},
so that the closed-loop system is stable and the noise $v$ is stationary.
\smallskip

 An experiment consisted of setting $r_1(k) = 0$ for all $k$, and using  $r_2$ as the excitation input. Multiple periods of a psuedo binary random signal with period  $N =127$ samples are used to excite the system    . Only the last period is used for identification, so that Assumption \ref{assumption}.(iii) holds. The  innovations  used a standard deviation $\sigma = 0.1$.
\smallskip

We ran $10^4$ independent MC simulations; each simulation consists of two independent experiments, which are used to evaluate the geometrically averaged estimators $\overline{\Gdir}$ and $\overline{\Giodh}$. The latter, in addition, used two independent copies of the input. The MC sample variances of the two estimators are then computed and compared with the asymptotic variances $\frac{1}{2}\asymvar[\Gdir]$ and $\frac{1}{2}\asymvar[\Giodh]$ (computed as in Corollary \ref{cor:no_leackage}). The results are shown in Figure \ref{fig:variances}, together with $|G|, |S|$, and the noise spectrum. 

Our first observation is that the  (non-asymptotic) sample variances follow closely the asymptotic variances  over the entire frequency range except between 0.4 and 0.7 rad/sec. The variances are large in this range where the sensitivity function is small. Figure \ref{fig:errors} shows the absolute value of the difference between the MC sample variances and the asymptotic variances. As predicted by Lemma \ref{lemma:main}, the  asymptotic variances give good approximations when the noise is relatively  small; the difference becomes larger at the peak of the noise spectrum.

Our second observation is that at low frequencies, the variance of $\Giodh$ is smaller than that of $\Gdir$, and the opposite occurs at high frequencies. This is exactly the behaviour predicted by our analysis; indeed as Figure \ref{fig:realGC} shows, the correlation between the numerator and denominator of the defining fraction of $\Gdir$ becomes negative only at frequencies $\omega > 0.65$ when $\Re[CG]$ is negative.

\begin{figure}
    \centering
    \includegraphics[width=\linewidth]{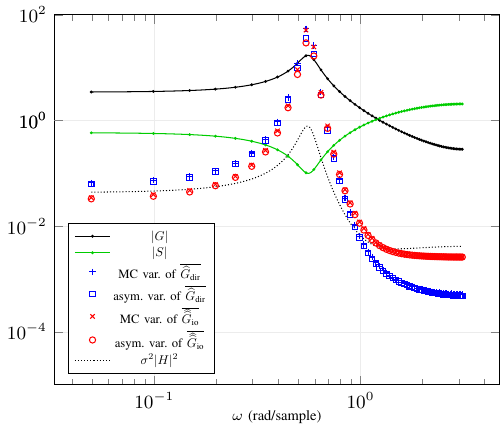}
     \caption{Variance comparison, together with $|G|$, $|S|$ and the noise spectrum. Sample variances are computed using $10^4$ MC simulations.}
    \label{fig:variances}
\end{figure}

\begin{figure}
    \centering
    \includegraphics[width=0.9\linewidth]{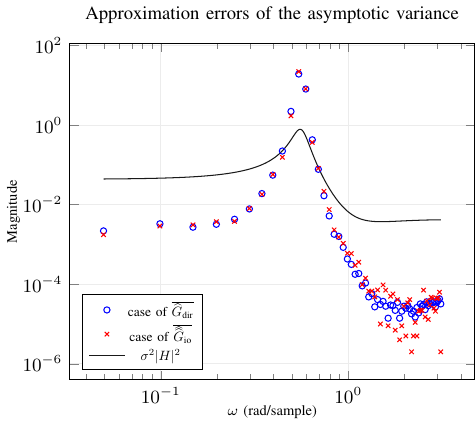}
     \caption{The absolute value of the difference between the actual (non-asymptotic) variances as approximated via Monte Carlo simulations, and the asymptotic variances. We also show the noise spectrum.   }
    \label{fig:errors}
\end{figure}

\begin{figure}
    \centering
    \includegraphics[width=0.9\linewidth]{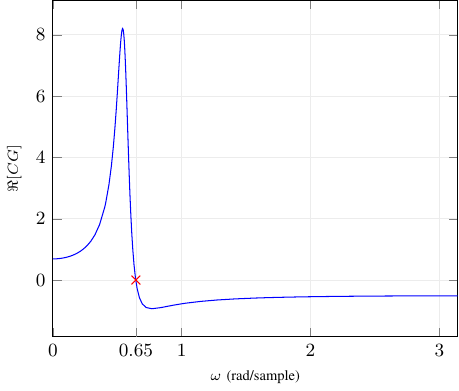}
     \caption{$\Re[GC]$ versus frequency: The sign of  $\Re[GC]$ switches to negative at $\omega = 0.64$ rad/sample. This is also the frequency around which the variance of $\overline{\Giodh}$ goes above that of $\overline{\Gdir}$.}
    \label{fig:realGC}
\end{figure}

\section{Conclusions}\label{sec:conclusions}

We provided a small noise analysis of the direct, indirect and joint input-output, non-parametric frequency domain closed-loop estimators. The results are given in terms of asymptotic variances (as $\sigma \to 0$) that are functions of the sensitivity function. As is well known, over the frequency ranges where the sensitivity function is small, $|S| \ll 1$, the closed-loop system suppresses the plant excitation leading to higher estimation variances for all methods.  

For larger values of the sensitivity, the performance of $\Gdir$ and $\Giodh$ depends on the loop shape. When $\Re[CG] <0$, the Nyquist plot is in the left half plane and corresponds to frequencies close to or larger than the cross over frequency. In these regions, the cross-correlation  $\bar{\sigma}_{yu}$ between $Y$ and $U$ is typically negative and the corresponding frequency-domain noises $V_y$ and $V_u$ have an obtuse/reflex angle between them. This means that their effects are in opposite directions, and the combined effect on the ratio $Y/U$ is, on average, larger than that of uncorrelated $V_y$ and $V_u$.

We also notice that for the open-loop case, where $|S| = 1, \; r = u $, the variance of the ETFE is $\sigma^2 |H|^2/ |U|^2$ and asymptotically in $\sigma$, the variance of the closed-loop direct estimator is approximately $\sigma^2 |H|^2/ |SR|^2$. But $U = SR$ in steady-state, and therefore, $\Gdir$ recovers the usual open-loop result for when $\sigma$ is small. On the other hand, the asymptotic variance of $\Giodh$  scales the open-loop variance with a factor $|S|^2(1+ |CG|^2)$. This means that $\Giodh$ may have a higher variance compared to the open-loop case.

\bibliographystyle{IEEEtran}
\bibliography{literature.bib}

\begin{thebibliography}{10}
\providecommand{\url}[1]{#1}
\csname url@samestyle\endcsname
\providecommand{\newblock}{\relax}
\providecommand{\bibinfo}[2]{#2}
\providecommand{\BIBentrySTDinterwordspacing}{\spaceskip=0pt\relax}
\providecommand{\BIBentryALTinterwordstretchfactor}{4}
\providecommand{\BIBentryALTinterwordspacing}{\spaceskip=\fontdimen2\font plus
\BIBentryALTinterwordstretchfactor\fontdimen3\font minus \fontdimen4\font\relax}
\providecommand{\BIBforeignlanguage}[2]{{%
\expandafter\ifx\csname l@#1\endcsname\relax
\typeout{** WARNING: IEEEtran.bst: No hyphenation pattern has been}%
\typeout{** loaded for the language `#1'. Using the pattern for}%
\typeout{** the default language instead.}%
\else
\language=\csname l@#1\endcsname
\fi
#2}}
\providecommand{\BIBdecl}{\relax}
\BIBdecl

\bibitem{Gevers1993}
M.~Gevers, \emph{Towards a Joint Design of Identification and Control?}\hskip 1em plus 0.5em minus 0.4em\relax Boston, MA: Birkh{\"a}user Boston, 1993, pp. 111--151.

\bibitem{Vandenhof1995}
P.~M. {Van den Hof} and R.~J. Schrama, ``Identification and control — closed-loop issues,'' \emph{Automatica}, vol.~31, no.~12, pp. 1751--1770, 1995.

\bibitem{Zang1995}
Z.~Zang, R.~R. Bitmead, and M.~Gevers, ``Iterative weighted least-squares identification and weighted {LQG} control design,'' \emph{Automatica}, vol.~31, no.~11, pp. 1577--1594, 1995.

\bibitem{Hjalmarsson1996}
H.~Hjalmarsson, M.~Gevers, and F.~{de Bruyne}, ``For model-based control design, closed-loop identification gives better performance,'' \emph{Automatica}, vol.~32, no.~12, pp. 1659--1673, 1996.

\bibitem{Ljung1999system}
L.~Ljung, \emph{System {I}dentification: {T}heory for the {U}ser}.\hskip 1em plus 0.5em minus 0.4em\relax Prentice Hall, 1999.

\bibitem{söderström1989system}
T.~S{\"o}derstr{\"o}m and P.~Stoica, \emph{System Identification}, ser. Prentice-Hall Software Series.\hskip 1em plus 0.5em minus 0.4em\relax Prentice Hall, 1989.

\bibitem{pintelon2012system}
R.~Pintelon and J.~Schoukens, \emph{System identification: {A} frequency domain approach}.\hskip 1em plus 0.5em minus 0.4em\relax John Wiley \& Sons, 2012.

\bibitem{Forssell1999}
U.~Forssell and L.~Ljung, ``Closed-loop identification revisited,'' \emph{Automatica}, vol.~35, no.~7, pp. 1215--1241, 1999.

\bibitem{Gevers1997}
M.~Gevers, L.~Ljung, and P.~M. {Van den Hof}, ``Asymptotic variance expressions for closed-loop identification and their relevance in identification for control,'' \emph{IFAC Proceedings Volumes}, vol.~30, no.~11, pp. 1393--1398, 1997.

\bibitem{Ninnes2005a}
B.~Ninness and H.~Hjalmarsson, ``On the frequency domain accuracy of closed-loop estimates,'' \emph{Automatica}, vol.~41, no.~7, pp. 1109--1122, 2005.

\bibitem{Ninnes2005b}
------, ``Analysis of the variability of joint input–output estimation methods,'' \emph{Automatica}, vol.~41, no.~7, pp. 1123--1132, 2005.

\bibitem{Wellstead1977}
P.~E. Wellstead, ``Reference signals for closed-loop identification,'' \emph{International Journal of Control}, vol.~26, no.~6, pp. 945--962, 1977.

\bibitem{Wellstead1981}
------, ``Non-parametric methods of system identification,'' \emph{Automatica}, vol.~17, no.~1, pp. 55--69, 1981.

\bibitem{Akaike1968}
H.~Akaike, ``On the use of a linear model for the identification of feedback systems,'' \emph{Annals of the Institute of statistical mathematics}, vol.~20, pp. 425--439, 1968.

\bibitem{Heath2001_ind_cl_bias}
W.~Heath, ``Bias of indirect non-parametric transfer function estimates for plants in closed loop,'' \emph{Automatica}, vol.~37, no.~10, pp. 1529--1540, 2001.

\bibitem{Broersen1995}
P.~Broersen, ``A comparison of transfer function estimators,'' \emph{IEEE Transactions on Instrumentation and Measurement}, vol.~44, no.~3, pp. 657--661, 1995.

\bibitem{Heath2000}
W.~Heath, ``Probability density function of indirect non-parametric transfer function estimates for plants in closed-loop,'' \emph{IFAC Proceedings Volumes}, vol.~33, no.~15, pp. 709--714, 2000.

\bibitem{Heath2001_cl_pdf}
W.~P. Heath, ``Characterising non-parametric estimators in closed-loop: The finite data case,'' in \emph{2001 European Control Conference (ECC)}, 2001, pp. 3558--3563.

\bibitem{Billinger2001}
D.~R. Brillinger, \emph{Time Series}.\hskip 1em plus 0.5em minus 0.4em\relax Society for Industrial and Applied Mathematics, 2001.

\bibitem{Heath2002_cl_var}
W.~Heath, ``The variation of non-parametric estimates in closed-loop,'' \emph{Automatica}, vol.~39, no.~11, pp. 1849--1863, 2003.

\bibitem{Pintelon2001}
R.~Pintelon and J.~Schoukens, ``Measurement of frequency response functions using periodic excitations, corrupted by correlated input/output errors,'' \emph{IEEE Transactions on Instrumentation and Measurement}, vol.~50, no.~6, pp. 1753--1760, 2001.

\bibitem{Pintelon2003}
R.~Pintelon, Y.~Rolain, and W.~Van~Moer, ``Probability density function for frequency response function measurements using periodic signals,'' \emph{IEEE Transactions on Instrumentation and Measurement}, vol.~52, no.~1, pp. 61--68, 2003.

\bibitem{Welsh2002}
J.~Welsh and G.~Goodwin, ``Finite sample properties of indirect nonparametric closed-loop identification,'' \emph{IEEE Transactions on Automatic Control}, vol.~47, no.~8, pp. 1277--1292, 2002.

\bibitem{Goodwin2001}
G.~C. Goodwin and J.~S. Welsh, \emph{Singularity Issues in Closed Loop Identification}.\hskip 1em plus 0.5em minus 0.4em\relax London: Springer London, 2001, pp. 27--52.

\bibitem{Stein2011fourier}
E.~M. Stein and R.~Shakarchi, \emph{Fourier analysis: {A}n introduction}.\hskip 1em plus 0.5em minus 0.4em\relax Princeton University Press, 2011, vol.~1.

\bibitem{Guillaume1996}
P.~Guillaume, I.~Kollar, and R.~Pintelon, ``Statistical analysis of nonparametric transfer function estimates,'' \emph{IEEE Transactions on Instrumentation and Measurement}, vol.~45, no.~2, pp. 594--600, 1996.

\bibitem{Vandenhof1993indirect}
P.~M. Van~den Hof and R.~J. Schrama, ``An indirect method for transfer function estimation from closed loop data,'' \emph{Automatica}, vol.~29, no.~6, pp. 1523--1527, 1993.

\end{thebibliography}
\end{document}